\numberwithin{equation}{section}
\theoremstyle{plain} % italic
\newtheorem{theorem}{\indent\bf Theorem}[section] 
\newtheorem{lemma}[theorem]{\indent\bf Lemma}
\newtheorem{corollary}[theorem]{\indent\bf Corollary}
\theoremstyle{definition} % 
\newtheorem{remark}[theorem]{\indent\bf Remark}
\begin{document}
\begin{flushright}
\today
\end{flushright}
\title[Exponential polynomials and Quantum computing]
{On zeros of exponential polynomials \\and quantum algorithms} %

\author[Y. Sasaki]{Yoshitaka Sasaki$^*$}

%%%%%%%%%%%%%%%%%%% ‹r' %%%%%%%%%%%%%%%%%%%%%%%%%%%%%%
%\subjclass[2000]{ % 2000MSC
%Primary 00; Secondary 00.}

\keywords{Quantum computing, 
Exponential congruence, Discrete logarithm, Character sum.}
\thanks{$*$Partly supported by ``Open Research Center" Project 
for Private Universities: matching fund subsidy from MEXT}

%%%%%%%%%%%% affiliation %%%%%%%%%%%%%

\address{ 
Interdisciplinary Graduate School of Science and Engineering \endgraf
Kinki University \endgraf
Higashi-Osaka, Osaka 577-8502 \endgraf
Japan}
\email{sasaki@alice.math.kindai.ac.jp}

%%%%%%%%%%%%%%%%%%%%%%%%%%%%%%%%%%%%%%%%%%%%%%%%%%%%%%%

\maketitle
\begin{abstract}
We calculate the zeros of an exponential polynomial of some variables 
by a classical algorithm and quantum algorithms which are 
based on the method of van Dam and Shparlinski, they 
treated the case of two variables, 
and compare with the complexity of those cases. 
Further we consider the ratio (classical$/$quantum) of the complexity. 
Then we can observe the ratio is virtually $2$ when the number of the variables 
is sufficiently large. 
\end{abstract}

%%%%%%%%%%%%%%%%%%%%%%%%%%%%%%%%%%%%%%%%%%%%%%%%%%
\section{Introduction} \label{intro}
%%%%%%%%%%%%%%%%%%%%%%%%%%%%%%%%%%%%%%%%%%%%%%%%%%%%%%%%
For a prime number $p$, we put $q = p^{\nu}$, where $\nu$ is 
a certain positive integer. 
Then we denote the finite field by $\mathbb{F}_q$ which has $q$ elements. 
Namely, $\mathbb{F}_q$ forms an additive group and 
$\mathbb{F}_q^{\times} := \mathbb{F}_q \backslash \{ 0 \}$ forms a 
multiplicative group, where $0$ is the zero element in $\mathbb{F}_q$. 
Any element of $\alpha \in \mathbb{F}_q^{\times}$ have a periodicity, that is 
there exists a smallest natural number $s$ such that $\alpha^s =1$. We call 
such $s$ the ``multiplicative order" of $\alpha$. It is known that 
the multiplicative order is a divisor of $\# \mathbb{F}_q^{\times} = q-1$. 
See \cite{ln}, \cite{cd} for the details. 

To evaluate the number of zeros of a homogeneous polynomial 
\[ F (x_0, \dots, x_{m}) = \sum_{(n_0, \dots, n_{m}) \in \mathbb{N}_0^{m+1}} 
a_{n_0, \dots, n_{m}} x_0^{n_0} \cdots x_{m}^{n_{m}} \]
is a very important problem in mathematics. Here, 
$\mathbb{N}_0 := \mathbb{N} \cup \{ 0 \}$ and 
$a_{n_1, \dots, n_{m}} \in \mathbb{F}_q$. 
The zeta-function associated with such polynomial (the congruence zeta-function)
 was introduced to treat this problem. 
Particularly, the zeros of the congruence zeta-function satisfies an analogue of 
the Riemann hypothesis called ``Weil conjecture". Therefore to compute the zeros of 
the congruence zeta-function is very important investigation. 
In \cite{d}, van Dam studied the zeros of 
the zeta-function associated with the Fermat surface by using quantum computing. 

In \cite{ds}, van Dam and Shparlinski treated 
the following exponential polynomial 
\begin{equation}
f (x, y) = a_1 g_1^{x} + a_2 g_2^{y} -b \label{eq2}
\end{equation}
and calculated the zeros of \eqref{eq2} by quantum algorithms. 
Further they compared the complexity due to a classical algorithm with 
that due to a quantum algorithm. Then the ``cubic" speed-up was observed. 

In this article, we treat the exponential polynomial of $n$ variables 
\begin{equation}
f_b (x_1, \dots, x_n) := a_1 g_1^{x_1} + \cdots + a_n g_n^{x_n} -b. \label{eq}
\end{equation}
We restrict $n \ll q^{\varepsilon}$ with a small $\varepsilon > 0$. 
The reason why we claim this restriction will be explained in Appendix, below. 
We calculate the solutions of $f_b (x_1, \dots, x_n) = 0$ by 
using quantum algorithms which are natural generalizations of the method 
of van Dam and Shparlinski. 
Here, $a_i$, $g_i \in \mathbb{F}_q^{\times}$ ($i =1, \dots, n$) and 
$b \in \mathbb{F}_q$. Further we also compare the complexity 
due to a classical algorithm with 
that due to a quantum algorithm. 
Then exponentially ``$(2n-1)/(n-1)$" times speed-up is observed. 
We notice that $(2n-1)/(n-1) = 2+1/(n-1)$ is virtually $2$ when $n$ 
is sufficiently large. 
This is the boundary between a standard classical algorithm and our quantum 
algorithm. 
In the previous paper \cite{osy}, Ohno, the author and Yamazaki treated 
the case of three variables and obtained the ratio $5/2$. 

In the next section, we introduce some notation and give the considerable 
lemma which supports whether there exist the zeros of \eqref{eq}. 
In Section \ref{cl}, we evaluate the complexity due to a classical algorithm. 
Further in Section \ref{qu}, we evaluate the complexity due to 
a quantum algorithm. 

%%%%%%%%%%%%%%%%%%%%%%%%%%%%%%%%%%%%%%%%%%%%%%%%%%%%%%%%%%%%
\section{The number of solution of equation}
%%%%%%%%%%%%%%%%%%%%%%%%%%%%%%%%%%%%%%%%%%%%%%%%%%%%%%%%%%%%%%
In this section, we give an important formula with respect to 
the density of solutions of 
\begin{equation}
f_b (x_1, \dots, x_n) = 0
\label{th1-eq}
\end{equation}
as Lemma \ref{lem1}, below. 
To state it, we introduce some notation. 

Let each $s_i$ be the multiplicative order of 
$g_i$ ($i=1, \dots, n$) in \eqref{th1-eq}. 
We put 
\begin{align*}
X_i &:= \{ 0, 1, \dots, s_i-1 \} \cong \mathbb{Z}/s_i \mathbb{Z}, \quad 
\text{($i=1, \dots, n$),} \\
X_{n} (r) &:= \{ 0, 1, \dots, r-1 \} \subseteq X_n \quad 
\text{$(r = 1,2, \dots, s_n)$,} \\
\bm{X}^n (r) &:= X_1 \times \cdots \times X_{n-1} \times X_{n} (r), \\ 
\bm{X}^n &:= \bm{X}^n (s_n) = X_1 \times \cdots \times X_{n-1} \times X_{n} 
\intertext{and}
\vec{x} &:= (x_1, \dots, x_n) \in \bm{X}^n (r). 
\end{align*}
Then we define
\begin{align*}
S_{f_b} (r) &:= \{ (x_1, \dots, x_n) \in \bm{X}^n (r) \ | 
\ f_b (x_1, \dots, x_n) = 0 \}, \\
N_{f_b} (r) &:= \# S_{f_b} (r) 
\end{align*}
for $r = 1, \dots, s_n$. 

By using above notation, we can state the following result:

\begin{lemma} \label{lem1}
Let $\delta$ be a parameter satisfying $\delta = o(q)$. 
For $r > \delta^2 q^n (\prod_{l=1}^{n-1} s_l)^{-2}$, 
we have 
\begin{equation}
N_{f_b} (r) = \frac{r \prod_{l=1}^{n-1} s_{l}}{q} 
+ O (\delta \sqrt{r q^{n-2}}), 
\end{equation}
except for at most $q/\delta^2$ exceptional $b$'s. 
Further $O$-constant can be taken $1$. 
\end{lemma}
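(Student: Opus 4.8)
The plan is to count solutions via the standard additive-character sieve over $\mathbb{F}_q$, then bound the resulting character sums using the Weil bound (or an elementary second-moment argument), averaging over $b$ to discard the exceptional set. Concretely, I would write
\[
N_{f_b}(r) = \sum_{\vec{x} \in \bm{X}^n(r)} \frac{1}{q} \sum_{\psi} \psi\bigl(a_1 g_1^{x_1} + \cdots + a_n g_n^{x_n} - b\bigr),
\]
where $\psi$ runs over all additive characters of $\mathbb{F}_q$. The contribution of the trivial character $\psi_0$ is exactly $\#\bm{X}^n(r)/q = r \prod_{l=1}^{n-1} s_l / q$, which is the main term. So everything reduces to controlling
\[
E_b(r) := \frac{1}{q} \sum_{\psi \neq \psi_0} \overline{\psi(b)} \prod_{i=1}^{n} \Biggl( \sum_{x_i \in X_i(r_i)} \psi\bigl(a_i g_i^{x_i}\bigr) \Biggr),
\]
where $r_i = s_i$ for $i < n$ and $r_n = r$; note the sum over $\vec{x}$ factors because $f_b$ is separable in the variables.

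Next I would estimate the inner one-variable sums. Since $g_i$ has multiplicative order $s_i$, the map $x_i \mapsto g_i^{x_i}$ is a bijection from $X_i$ onto the cyclic subgroup $\langle g_i \rangle \le \mathbb{F}_q^\times$, so for $i = 1, \dots, n-1$ each complete sum $\sum_{x_i \in X_i} \psi(a_i g_i^{x_i})$ is a sum of $\psi$ over a coset-type set and is bounded by $\sqrt{q}$ in absolute value by the standard estimate for such multiplicative-subgroup character sums (this is where Weil, or the Gauss-sum/Kloosterman-type bound, enters). For the truncated $n$-th sum over $X_n(r)$, completing the sum via the usual Fourier expansion on $\mathbb{Z}/s_n\mathbb{Z}$ costs a logarithmic factor or, in the cleaner version, one instead keeps the truncated sum and estimates the full quantity by a mean-value (large-sieve / second-moment) inequality over $b$.

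The cleanest route to the \emph{exceptional-set} formulation is the second-moment / Parseval argument: compute $\sum_{b \in \mathbb{F}_q} |E_b(r)|^2$ by expanding the square and using orthogonality of additive characters in $b$, which collapses the double sum over $\psi, \psi'$ to the diagonal $\psi = \psi'$ and yields
\[
\sum_{b \in \mathbb{F}_q} |E_b(r)|^2 = \frac{1}{q} \sum_{\psi \neq \psi_0} \prod_{i=1}^{n} \Biggl| \sum_{x_i} \psi(a_i g_i^{x_i}) \Biggr|^2 \le \frac{1}{q} \cdot q \cdot q^{n-1} \cdot \Bigl( \text{bound on the }X_n(r)\text{-sum} \Bigr),
\]
using $|\sum_{x_i \in X_i} \psi(a_i g_i^{x_i})|^2 \le q$ for $i < n$ and summing the $n$-th factor over $\psi$ by Parseval to get $\le r q / s_n$ or similar; the upshot is $\sum_b |E_b(r)|^2 \ll r q^{n-1} \cdot (\prod_{l=1}^{n-1} s_l)^{-1} \cdot (\text{something})$, and then Chebyshev says $|E_b(r)| > \delta \sqrt{r q^{n-2}}$ can hold for at most $q/\delta^2$ values of $b$ (after matching the constants, which is where the claim that the $O$-constant is $1$ must be checked by being careful not to waste factors). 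For the non-exceptional $b$, $|E_b(r)| \le \delta \sqrt{r q^{n-2}}$ gives exactly the stated error term. The hypothesis $r > \delta^2 q^n (\prod_{l=1}^{n-1} s_l)^{-2}$ is precisely what guarantees the error term is smaller than the main term (so that the estimate is non-vacuous), and $\delta = o(q)$ keeps the exceptional count $q/\delta^2$ meaningful.

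The main obstacle, and the step needing the most care, is the bookkeeping of constants so that the $O$-constant is genuinely $1$: one must avoid the usual slack in (a) the Weil-type bound for the subgroup sums $|\sum \psi(a_i g_i^{x_i})| \le \sqrt{q}$ — which is fine as stated since for a nontrivial additive character summed over a full multiplicative subgroup the bound is clean — and (b) the passage from the second moment to the pointwise bound off the exceptional set, where Chebyshev is tight. A secondary subtlety is whether the $X_n(r)$-truncation genuinely contributes only $\sqrt{rq}$-type savings without an extra $\log$; using Parseval in the aggregate (rather than completing each sum individually) is what avoids the logarithm, and I would organize the proof that way. Everything else — factoring $f_b$ across variables, identifying the main term, applying orthogonality in $b$ — is routine.
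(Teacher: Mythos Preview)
Your approach is essentially identical to the paper's: additive-character sieve, second moment over $b$ collapsed by orthogonality, Weil bound $|\sum_{x_j\in X_j}\psi(a_j\mu g_j^{x_j})|\le\sqrt{q}$ on the $n-1$ complete subgroup sums, Parseval on the truncated $X_n(r)$ sum, then Chebyshev. The only place your sketch wobbles is the intermediate bookkeeping --- the second moment comes out cleanly as $\sum_{b}|E_b(r)|^2 < r\,q^{n-1}$ (the Parseval step on the $n$-th factor gives exactly $qr$, not $rq/s_n$, and there is no $(\prod s_l)^{-1}$), from which Chebyshev yields the $q/\delta^2$ exceptional set with $O$-constant $1$ exactly as claimed.
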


Choosing $\delta = (\log q)^{1/2}$ in Lemma \ref{lem1}, we have 
\begin{corollary} \label{cor}
If $q^n (\prod_{l=1}^{n-1} s_l)^{-2} \log q < r \leq s_n$, 
then we see that $S_{f_b} (r) \neq \phi$ holds except for 
at most $q/\log q$ exceptional $b$'s. 
\end{corollary}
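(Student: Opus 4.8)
The deduction of Corollary \ref{cor} from Lemma \ref{lem1} is essentially a matter of choosing the parameter $\delta$ so as to make the error term negligible against the main term, and then reading off when the latter forces $N_{f_b}(r)\geq1$. The plan is: (i) take $\delta=(\log q)^{1/2}$, which trivially satisfies $\delta=o(q)$, so that Lemma \ref{lem1} applies; (ii) note that the hypothesis $q^{n}\bigl(\prod_{l=1}^{n-1}s_l\bigr)^{-2}\log q<r$ is precisely the condition $r>\delta^{2}q^{n}\bigl(\prod_{l=1}^{n-1}s_l\bigr)^{-2}$ required by the lemma, and that the exceptional set then has size at most $q/\delta^{2}=q/\log q$; (iii) invoke the sharp form of the lemma (with $O$-constant equal to $1$) to obtain a clean lower bound for $N_{f_b}(r)$; and (iv) check that this lower bound is strictly positive, hence $N_{f_b}(r)\geq1$ since $N_{f_b}(r)\in\mathbb{N}_0$.

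Carrying out (iii): for every non-exceptional $b$ and every $r$ with $q^{n}\bigl(\prod_{l=1}^{n-1}s_l\bigr)^{-2}\log q<r\leq s_n$, Lemma \ref{lem1} gives
\[ N_{f_b}(r)\;\geq\;\frac{r\prod_{l=1}^{n-1}s_l}{q}-(\log q)^{1/2}\sqrt{r\,q^{n-2}}. \]
For (iv) it is enough to see the right-hand side is $>0$. Since both terms on the right are non-negative, this is equivalent to
\[ \left(\frac{r\prod_{l=1}^{n-1}s_l}{q}\right)^{2}\;>\;(\log q)\,r\,q^{n-2}, \]
that is, after cancelling one factor $r$ and clearing the powers of $q$, to $r\bigl(\textstyle\prod_{l=1}^{n-1}s_l\bigr)^{2}>q^{n}\log q$ --- which is exactly the hypothesis of the Corollary. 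Hence $N_{f_b}(r)>0$, and being a non-negative integer it is $\geq1$, so $S_{f_b}(r)\neq\phi$.

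I expect no genuine obstacle here once Lemma \ref{lem1} is granted; the whole difficulty sits in that lemma (whose proof should express $N_{f_b}(r)$ via orthogonality of the additive characters of $\mathbb{F}_q$, estimate the resulting character sums, and control the mean-square deviation over $b$ so as to produce the exceptional set). The two points that do require attention in the corollary are: the inequality on $r$ must be used in its strict form, so that squaring preserves strictness and the borderline case $N_{f_b}(r)=0$ is excluded; and one must use precisely the refinement that the $O$-constant may be taken to be $1$, since an unspecified constant would inflate the admissible range of $r$ and spoil the stated shape of the result. Finally, it is worth recording that the conclusion has content only when the $r$-range is non-empty, i.e. when $\bigl(\prod_{l=1}^{n-1}s_l\bigr)^{2}s_n>q^{n}\log q$, which is the regime of sufficiently large multiplicative orders in which \eqref{th1-eq} can be expected to have solutions inside the box $\bm{X}^{n}(r)$.
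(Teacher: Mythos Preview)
Your proof is correct and follows exactly the approach of the paper, which simply states ``Choosing $\delta=(\log q)^{1/2}$ in Lemma~\ref{lem1}, we have Corollary~\ref{cor}'' without writing out the verification; you have merely made explicit the arithmetic that the paper leaves to the reader. Your observations about the strict inequality and the $O$-constant being $1$ are precisely the points that make the deduction work.
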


\begin{remark}
The exponent $1/2$ of $\delta = (\log q)^{1/2}$ is not necessary. 
In fact, $\delta = (\log q)^{\varepsilon}$ 
with any $\varepsilon > 0$ is sufficient. 
\end{remark}

\begin{proof}[\indent\sc Proof of Lemma \ref{lem1}]
Let $\psi$ be a non-trivial additive character over $\mathbb{F}_q$, in fact, any additive 
character over $\mathbb{F}_q$ can be given as a map $\mathbb{F}_q \to \mathbb{C}_1^*$, where 
$\mathbb{C}_1^* := \{ z \in \mathbb{C} | |z| = 1 \}$ (see \cite[Theorem 5.7]{ln}). 
To evaluate $N_{f_b} (\bm{v})$, we use the following formula which plays as a counting function:
\begin{equation}
\frac{1}{q} \sum_{\mu \in \mathbb{F}_q} \psi (u\mu) = 
\begin{cases}
1 & \text{if $u = 0$,} \\
0 & \text{otherwise.}
\end{cases} \label{or}
\end{equation}
Then we have
\begin{align}
N_{f_b} (r) &= \sum_{\vec{x} \in \bm{X}^n (r)}
\frac{1}{q} \sum_{\mu \in \mathbb{F}_q} 
\psi (\mu (f_b (x_1, \dots, x_n))) \label{nf} \\
&= \frac{r \prod_{j=1}^{n-1} s_l}{q} + \frac{1}{q} \sum_{\mu \in \mathbb{F}_q^*} 
\sum_{\vec{x} \in \bm{X}^n (r)} 
\psi (\mu (f_b (x_1, \dots, x_n))) \notag \\
&=: \frac{r \prod_{l=1}^{n-1} s_l}{q} + \Delta_b (r). \notag
\end{align}
If the contribution from the second term on the right-hand side of 
the above formula 
can be estimated by $o(r \prod_{l=1}^{n-1} s_l/q)$, the above formula tells us 
the existence of the solution of $f_b (x_1, \dots, x_n)$. 
To consider it, we evaluate the mean value of the second term on the right-hand 
side of \eqref{nf} 
with respect to $b$. Namely, we evaluate 
\[ E (r) := \sum_{b \in \mathbb{F}_q} 
\left| \Delta_b (r) \right|^2. \]
From \eqref{or} and some properties of the additive character over $\mathbb{F}_q$, 
we obtain
\begin{align*}
E (r) =& \frac{1}{q^2} \sum_{\mu, \mu' \in \mathbb{F}_q^{\times}} 
\left( \prod_{j=1}^{n-1} \left( \sum_{x_j, x_j' \in X_j} 
\psi (a_j (\mu g_j^{x_j} - \mu' g_j^{x_j'})) \right) \right)
\sum_{x_n, x_n' \in X_n (r)} 
\psi (a_n (\mu g_n^{x_n} - \mu' g_n^{x_n'})) \\
& \quad \times \sum_{b \in \mathbb{F}_q} \psi (b(\mu'-\mu)) \\
=& \frac{1}{q} \sum_{\mu \in \mathbb{F}_q^{\times}} 
\left( \prod_{j=1}^{n-1} \left( \sum_{x_j, x_j' \in X_j} 
\psi (a_j \mu (g_j^{x_j} - g_j^{x_j'})) \right) \right) 
\sum_{x_n, x_n' \in X_n (r)} 
\psi (a_n \mu (g_n^{x_n} - g_n^{x_n'})) \\
=& \frac{1}{q} \sum_{\mu \in \mathbb{F}_q^{\times}} 
\left( \prod_{j=1}^{n-1} \Biggl| \sum_{x_j \in X_j} 
\psi (a_j \mu g_j^{x_j}) \Biggr|^2 \right)
\Biggl| \sum_{x_n \in X_n (r)} \psi (a_n \mu g_n^{x_n}) \Biggr|^2. 
\end{align*}
It is known that 
\begin{align*}
\Biggl| \sum_{x_j \in X_j} \psi (a_j \mu g_j^{x_j}) \Biggr| 
&\leq \sqrt{q} \quad \text{for $j=1,\dots, n-1$ and 
any $\mu \in \mathbb{F}_q^{\times}$} 
\end{align*}
(see Theorem 8.78 in \cite{ln}). Hence we have
\begin{align*}
E (r) <& q^{n-2}
\sum_{\mu \in \mathbb{F}_q} 
\Biggl| \sum_{x_n \in X_n (r)} \psi (a_n \mu f^{x_n}) \Biggr|^2 
= q^{n-1} r. 
\end{align*}
Therefore, if we put $\delta = o(q)$, then we can see that 
there exist at most $q/\delta^2$ exceptional $b$'s such that 
\begin{equation}
\left| \frac{1}{q} \sum_{\mu \in \mathbb{F}_q^*} 
\sum_{\vec{x} \in X_n (r)} 
\psi (\mu (f_b (x_1, \dots, x_n))) \right| \geq \delta \sqrt{rq^{n-2}}. 
\end{equation}
Hence we obtain 
\begin{equation*}
N_{f_b} (r) = \frac{r \prod_{l=1}^{n-1} s_l}{q} 
+ O ( \delta \sqrt{q^{n-2} r} )
\end{equation*}
for other $b$'s. 
Now, the proof of Lemma \ref{lem1} is completed. 
\end{proof}

%%%%%%%%%%%%%%%%%%%%%%%%%%%%%%%%%%%%%%%%%%%%%%%%%%%
\section{Calculation of the deterministic time for a classical algorithm} \label{cl}
%%%%%%%%%%%%%%%%%%%%%%%%%%%%%%%%%%%%%%%%%%%%%%%%%%%%%%
We follow the method of van Dam and Shparlinski~\cite{ds}. Then we have 
\begin{theorem} \label{th1}
Except for at most $q/\log q$ exceptional $b$'s, 
we can either find a solution $\vec{x} \in \bm{X}^n$ of the equation  
\eqref{th1-eq} or decide that it does not have a solution in deterministic time 
$q^{n(n+1)/2(2n-1)} (\log q)^{O(1)}$ as a classical computer. 
\end{theorem}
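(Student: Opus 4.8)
The plan is to exploit Corollary~\ref{cor} as a decision oracle: to test whether \eqref{th1-eq} has a solution in $\bm{X}^n$, it suffices to search $S_{f_b}(s_n)$ directly whenever $s_n$ is larger than the threshold $T := q^n (\prod_{l=1}^{n-1} s_l)^{-2} \log q$, because in that regime Corollary~\ref{cor} guarantees $S_{f_b}(s_n) \neq \phi$ for all but $q/\log q$ values of $b$; and if $s_n \leq T$ the set $\bm{X}^n$ is small enough to enumerate outright. So the algorithm splits into two brute-force branches whose costs are balanced against a free parameter, and we then optimize. Let $P := \prod_{l=1}^{n-1} s_l$ and write $|\bm{X}^n| = P \cdot s_n$. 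Since each $s_l \leq q-1 < q$, we have $P < q^{n-1}$.

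First I would describe the enumeration. To decide solvability one sweeps over all $\vec{x} \in \bm{X}^n$, evaluating $f_b(\vec{x})$ by repeated squaring in $\mathbb{F}_q$; each evaluation costs $(\log q)^{O(1)}$, so the naive cost is $|\bm{X}^n|(\log q)^{O(1)} = P s_n (\log q)^{O(1)}$. The key point, following van Dam--Shparlinski, is that we need not take $s_n$ at face value: for any parameter $R$ with $T < R \leq s_n$, Corollary~\ref{cor} already tells us (outside the exceptional set) that $S_{f_b}(R) \neq \phi$, so it is enough to search the truncated box $\bm{X}^n(R)$, at cost $P R (\log q)^{O(1)}$. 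Hence, writing $R := \max(s_n, T)$ we always get a correct decision—and a witness when one exists—in time $P \cdot \min(s_n, \max(s_n,T)) (\log q)^{O(1)}$; a clean way to package the bound is that the cost never exceeds $P \cdot \max\!\big(\min(s_n,\,\text{something}),\,T\big)$. Concretely: if $s_n \leq T$ enumerate all of $\bm{X}^n$ at cost $P s_n \leq P T = q^n P^{-1} \log q < q^n \cdot q^{-(n-1)} \log q$—but this is too crude, so one instead balances against $P$ itself.

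The actual optimization is the arithmetic core. The running time is dominated by $P \cdot \Theta$ where $\Theta$ is whichever of $s_n$ or the Corollary threshold we are forced to use, and one bounds $P \le q^{n-1}$ while the threshold-driven term contributes a factor $q^n/P^2$. Setting the search cost $\asymp P\,s_n$ against the worst case where Corollary~\ref{cor} forces $s_n \asymp q^n/P^2$, the relevant product is $P \cdot q^n / P^2 = q^n/P$, which one must balance against $P$ itself by the standard two-branch argument: the bound is $\max(P,\; q^n/P)$ in the worst case, but the refined van Dam--Shparlinski bookkeeping—tracking that only the $n$-th coordinate is truncated while the other $n-1$ coordinates each contribute a full $\sqrt q$ cancellation in Lemma~\ref{lem1}—replaces the symmetric split by an asymmetric one, yielding the exponent $\tfrac{n(n+1)}{2(2n-1)}$. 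I would verify this by writing $P = q^{\sigma}$, expressing both branch costs as powers of $q$ in $\sigma$, and checking that the worst-case $\sigma$ (the one making the two branches equal) produces exactly $q^{n(n+1)/2(2n-1)}$; the polylog factors from field arithmetic and from $\delta = (\log q)^{1/2}$ in Corollary~\ref{cor} collect into the $(\log q)^{O(1)}$.

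The main obstacle is this last balancing: one has $n$ multiplicative orders $s_1, \dots, s_n$ that are not under our control, and the cost $P s_n$ together with the constraint $s_n \le q-1$ and the Corollary threshold $T = q^n/(P^2)\cdot\log q$ must be combined into a single worst-case exponent. The subtlety is that $s_n$ appears both as a search length (we want it small) and, through the Corollary, as the guarantee of a solution (we want it above $T$); and there is freedom in which coordinate we designate as the "$n$-th" one and in ordering the $s_i$. Getting $\tfrac{n(n+1)}{2(2n-1)}$ rather than a weaker exponent requires choosing the labeling to make $P$ as favorable as possible and then solving the resulting $\max$-of-powers problem exactly—essentially a short Lagrange/rearrangement computation that is routine once set up but easy to misbalance. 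For $n=2$ this must reproduce the van Dam--Shparlinski exponent $3/5 = \tfrac{2\cdot 3}{2\cdot 3}$, wait, $\tfrac{n(n+1)}{2(2n-1)}\big|_{n=2} = \tfrac{6}{6} = 1$—so one should instead check it reduces correctly against \cite{ds} and against the $n=3$ value from \cite{osy}, which is the sanity check I would run before trusting the general formula.
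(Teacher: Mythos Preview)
Your proposal has a genuine gap: you never use a discrete logarithm subroutine. The paper's algorithm does \emph{not} sweep over all of $\bm{X}^n(r)$ and evaluate $f_b$ at each point. Instead, after ordering $s_1\ge\cdots\ge s_n$, it loops only over $(x_2,\dots,x_{n-1},x_n)\in X_2\times\cdots\times X_{n-1}\times X_n(r)$ and, for each such tuple, solves
\[
g_1^{x_1}=a_1^{-1}\bigl(b-a_2g_2^{x_2}-\cdots-a_ng_n^{x_n}\bigr)
\]
for $x_1$ by a classical deterministic discrete-log algorithm (baby-step giant-step), which costs $s_1^{1/2}(\log q)^{O(1)}$ rather than $s_1$. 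This replaces your cost $P\cdot\min(s_n,r)$ by $s_1^{1/2}\bigl(\prod_{l=2}^{n-1}s_l\bigr)\min(s_n,r)$, and the paper then bounds this by $\bigl((\prod_{l=1}^{n-1}s_l)^2 r\bigr)^{1/2}\approx q^{n/2}$ using $r\asymp q^n(\prod_{l=1}^{n-1}s_l)^{-2}\log q$. Without the square-root saving on the $x_1$ coordinate, your pure enumeration cannot reach this exponent: in the symmetric case $s_1=\cdots=s_n=s$ your cost is $s^n$ on one branch and $q^n/s^{n-1}$ on the other, balancing at $q^{n^2/(2n-1)}>q^{n/2}$.

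Your ``balancing'' paragraph is also not a proof: you set up $\max(P,\,q^n/P)$ but never solve it, then invoke ``refined van Dam--Shparlinski bookkeeping'' to jump to the exponent in the statement without computation, and your final sanity check visibly fails (you compute $1$ for $n=2$ and compare it to $3/5$, which is not even the classical exponent). In fact the paper's own proof yields the bound $q^{n/2}(\log q)^{O(1)}$ in both cases, consistent with the ``Classical $=n/2$'' entry in the concluding table; the exponent $n(n+1)/2(2n-1)$ printed in the theorem statement is not what the proof establishes, so you were chasing a target the argument does not actually hit.
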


\begin{proof}
Using a standard deterministic factorization algorithm, we factorize $q-1$ 
and find the orders $s_j$ of $g_j$ ($j=1,\dots, n$) 
in time $q^{1/2} (\log q)^{O(1)}$. 
We may assume without loss of generality that $s_1 \geq \cdots \geq s_n$. 
For calculated orders $s_1, \dots, s_{n-1}$, we put 
\begin{equation}
r = \Bigl\lceil q^n \Bigl( \prod_{l=1}^{n-1} s_l \Bigr)^{-2} \log q 
\Bigr\rceil. \label{r}
\end{equation}
Then we see that the solution of \eqref{th1-eq} certainly exists 
when $r \leq s_n$. However, when $r > s_n$, 
we do not know whether such solutions exist. Therefore we have to 
consider those two cases. 

For each $(x_2, \dots, x_{n-1}, x_n) \in X_2 \times \cdots 
\times X_{n-1} \times X_n (r)$, 
we calculate the deterministic time of the discrete logarithm $x_1$ such that 
$g_1^{x_1} = a_1^{-1} ( b-a_2 g_2^{x_2} -\cdots - a_n g_n^{x_n})$. 
It is known that the deterministic time for this case is 
$s_1^{1/2} (\log q)^{O(1)}$ (see Section 5.3 in \cite{cp}). 
\begin{enumerate}
\item The case $r \leq s_n$. 
We have 
\[ s_1^{1/2} \Bigl( \prod_{l=2}^{n-1} s_l \Bigr) r (\log q)^{O(1)} \ll 
q^{n/2} (\log q)^{O(1)}, \]
since $s_1^{1/2} (\prod_{l=2}^{n-1} s_l) r < ((\prod_{l=1}^{n-1} s_l)^2 r)^{1/2}$. 
\item The case $r > s_n$. Similarly, we see that the deterministic time is 
\[ s_1^{1/2} \Bigl( \prod_{l=2}^{n} s_l \Bigr) (\log q)^{O(1)} 
\ll q^{n/2} (\log q)^{O(1)}, \]
since $s_1^{1/2} \prod_{l=2}^n s_l < ((\prod_{l=1}^{n-1} s_{l})^2 s_n)^{1/2} 
< ((\prod_{l=1}^{n-1} s_l)^2 r)^{1/2}$. 
\end{enumerate}
\end{proof}

%%%%%%%%%%%%%%%%%%%%%%%%%%%%%%%%%%%%%%%%%%%%%%%%%%%%%%%%%%%%%%%%%%%%%%%%
\section{Calculation of the complexity for a quantum algorithm} \label{qu}
%%%%%%%%%%%%%%%%%%%%%%%%%%%%%%%%%%%%%%%%%%%%%%%%%%%%%%%%%%%%%%%%%%%%%%%%
In this section, we describe quantum algorithms which are based on the 
method of \cite{ds}. Hereafter $\varepsilon$ is any positive and small real 
number. 

\begin{theorem} \label{th2}
Except for at most $q/\log q$ exceptional $b$'s, 
we can either find a solution $\vec{x} \in \bm{X}^n$ of the equation 
\eqref{th1-eq} 
or decide that it does not have a solution in time 
$q^{n(n-1)/2(2n-1)+\varepsilon} (\log q)^{O(1)}$ as a quantum computer. 
\end{theorem}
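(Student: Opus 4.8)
The plan is to follow the method of \cite{ds}, keeping the shape of the classical algorithm of Theorem \ref{th1} but replacing its deterministic factoring of $q-1$, its deterministic discrete logarithm in $\langle g_1\rangle$, and its exhaustive search by Shor's factoring algorithm, Shor's discrete-logarithm algorithm and Grover search. First I run Shor's algorithm on $q-1$, read off the multiplicative orders, and relabel so that $s_1\ge\cdots\ge s_n$; this costs $(\log q)^{O(1)}$. I then fix $T:=q^{n/(2n-1)}$ and split into cases according to whether $(\prod_{l=1}^{n-1}s_l)^2 s_n$ is at least $4q^n\log q$ or not; the former is exactly the condition under which Lemma \ref{lem1} applies with the truncation parameter taken to be $r=s_n$ and $\delta=(\log q)^{1/2}$.

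In the large-orders case, $(\prod_{l=1}^{n-1}s_l)^2 s_n\ge 4q^n\log q$, Lemma \ref{lem1} with $r=s_n$ (where the error constant is $1$) gives $N_{f_b}(s_n)\ge\tfrac{1}{2}q^{-1}\prod_{l=1}^n s_l$ for all $b$ outside an exceptional set of size $\le q/\log q$. I then run a Grover-type search — in the variant that needs no a~priori bound on the number of marked states — over the box $X_2\times\cdots\times X_n$, calling a tuple $(x_2,\dots,x_n)$ marked when $a_1^{-1}(b-a_2g_2^{x_2}-\cdots-a_ng_n^{x_n})$ lies in $\langle g_1\rangle$; this predicate is a reversible circuit of size $n(\log q)^{O(1)}$. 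Marked tuples are in bijection with $S_{f_b}(s_n)$, so the marked fraction is at least $s_1/(2q)$ and the search terminates in $O(\sqrt{q/s_1})$ iterations, after which Shor's discrete-logarithm algorithm recovers $x_1$. Since the hypothesis forces $s_1\ge(4q^n\log q)^{1/(2n-1)}\ge q^{n/(2n-1)}$, this costs $q^{(n-1)/(2(2n-1))}(\log q)^{O(1)}$, which is safely below the target exponent.

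In the complementary small-orders case, $(\prod_{l=1}^{n-1}s_l)^2 s_n< 4q^n\log q$, no solution is promised, so I Grover-search the full box $X_2\times\cdots\times X_n$ with the same predicate and report ``no solution'' when nothing is marked — correct because every $\vec{x}\in\bm{X}^n$ is represented. This costs $\sqrt{\prod_{j=2}^n s_j}\,(\log q)^{O(1)}$, so it remains to bound $\prod_{j=2}^n s_j$. The hypothesis gives $(\prod_{l=1}^n s_l)^2=\bigl[(\prod_{l=1}^{n-1}s_l)^2 s_n\bigr]s_n<4q^n(\log q)\,s_1$, hence $\prod_{j=2}^n s_j<2q^{n/2}(\log q)^{1/2}\,s_1^{-1/2}$. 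If $s_1\le T$ then $\prod_{j=2}^n s_j\le s_1^{\,n-1}\le q^{n(n-1)/(2n-1)}$, while if $s_1>T$ then $\prod_{j=2}^n s_j<2q^{n/2}(\log q)^{1/2}T^{-1/2}=2q^{n(n-1)/(2n-1)}(\log q)^{1/2}$; either way $\sqrt{\prod_{j=2}^n s_j}\le q^{n(n-1)/(2(2n-1))}(\log q)^{O(1)}$. Taking the worst of the three cases and absorbing the poly-logarithms, the $O(1)$ search constants and the factor $n=q^{o(1)}$ from one oracle call — this is where $n\ll q^{\varepsilon}$ is used — into $q^{\varepsilon}(\log q)^{O(1)}$, I obtain the asserted running time, with the exceptional set of $\le q/\log q$ values of $b$ inherited from Lemma \ref{lem1}.

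The delicate point, and the one most likely to break, is making the case analysis around $T=q^{n/(2n-1)}$ airtight: one has to check that in the large-orders regime the correct quantity to threshold is the applicability condition of Lemma \ref{lem1}, so that a solution not only exists but has \emph{relative density} $\asymp s_1/q$, giving search depth $\sqrt{q/s_1}$ rather than $\sqrt{\prod_{j\ge2}s_j}$ — the naive truncation-to-$X_n(r)$ search is already too slow here once $n\ge4$ — and that in the small-orders regime the single inequality $\prod_{l=1}^n s_l<2q^{n/2}(\log q)^{1/2}s_1^{1/2}$ genuinely controls every order-configuration, from one order $\approx q$ with the rest tiny to all orders $\approx T$. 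The value $T=q^{n/(2n-1)}$ is forced by equating the two small-orders subcosts $T^{(n-1)/2}$ and $q^{n/2}T^{-1/2}$, and it is this balance that produces the exponent $n(n-1)/2(2n-1)$; beyond Lemma \ref{lem1} no further number-theoretic input is needed.
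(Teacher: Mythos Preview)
Your proof is correct but follows a genuinely different route from the paper's. Both arguments start the same way---Shor for the orders, a polynomial-time discrete-log subroutine $\mathcal{S}$, and essentially the same case split (the paper phrases it as $r\le s_n$ versus $r>s_n$ with $r=\lceil q^n(\prod_{l=1}^{n-1}s_l)^{-2}\log q\rceil$, which is your condition $(\prod_{l=1}^{n-1}s_l)^2 s_n\gtrless q^n\log q$ up to constants). They diverge in the large-orders branch: the paper \emph{truncates} the last coordinate to $X_n(r)$, relies on Corollary~\ref{cor} for a single marked point in $X_2\times\cdots\times X_{n-1}\times X_n(r)$, runs plain Grover there, and bounds the cost via $r\prod_{l=2}^{n-1}s_l\le\bigl((\prod_{l=1}^{n-1}s_l)^2 r\bigr)^{(n-1)/(2n-1)}$ together with $(\prod_{l=1}^{n-1}s_l)^2 r\approx q^n\log q$. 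You instead search the full box and invoke the Boyer--Brassard--H{\o}yer--Tapp multi-solution variant, feeding in the density estimate $N_{f_b}(s_n)\asymp q^{-1}\prod_l s_l$ from Lemma~\ref{lem1}; this is precisely the mechanism the paper reserves for Theorem~\ref{th3}, and it hands you the sharper exponent $(n-1)/2(2n-1)$ in that regime for free. In the small-orders branch the two proofs agree (full-box Grover), though the paper reaches $\prod_{l=2}^n s_l\le q^{n(n-1)/(2n-1)}(\log q)^{O(1)}$ in one line via $\prod_{l=2}^n s_l\le\bigl((\prod_{l=1}^{n-1}s_l)^2 s_n\bigr)^{(n-1)/(2n-1)}$, rather than your threshold split on $s_1\lessgtr T$.

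Your remark that the truncation search is ``already too slow here once $n\ge4$'' is in fact a valid objection to the paper's own argument: when all $s_l\approx q$ the ceiling forces $r=1$, so $(\prod_{l=1}^{n-1}s_l)^2 r\approx q^{2(n-1)}$ and the step $(\prod_{l=1}^{n-1}s_l)^2 r\ll q^n(\log q)^{O(1)}$ implicitly used by the paper fails; the resulting search cost $\sqrt{\prod_{l=2}^{n-1}s_l}\approx q^{(n-2)/2}$ exceeds the target $q^{n(n-1)/2(2n-1)}$ for every $n\ge4$. Your BBHT-based treatment of the large-orders case sidesteps this gap, so your argument is the more robust of the two.
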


\begin{proof}
Using Shor's algorithm~\cite{sh}, we can obtain the multiplicative orders 
$s_j$'s ($j=1, \dots, n$) in polynomial time. 
We may assume without loss of generality that $s_1 \geq \cdots \geq s_n$. 
As in the proof of Theorem \ref{th1}, we put $r$ as \eqref{r}. 
Further, we consider a polynomial time quantum subroutine 
$\mathcal{S} (x_2, \dots, x_n)$ which either finds and returns $x_1 \in X_1$ with 
\[ g_1^{x_1} = a_1^{-1} ( b-a_2 g_2^{x_2} -\cdots -a_n g_n^{x_n}) \]
or reports that no such $x_1$ exists for a given 
$(x_2, \dots, x_{n-1}, x_n) \in X_2 \times\cdots \times X_{n-1} \times X_n (r)$ 
by using Shor's discrete logarithm algorithm. 
\begin{enumerate}
\item The case $r \leq s_n$. 
Using Grover's search algorithm~\cite{g}, we search the subroutine 
$\mathcal{S} (x_2, \dots, x_n)$ for all $(x_2, \dots, x_{n-1}, x_n) 
\in X_2 \times \cdots \times X_{n-1} \times X_n (r)$ in time 
\[ q^{\varepsilon} (r \prod_{l=2}^{n-1} s_l)^{1/2} (\log q)^{O(1)} \ll 
q^{n(n-1)/2(2n-1)+\varepsilon} (\log q)^{O(1)}, \]
since $r \prod_{l=2} s_l \leq ((\prod_{l=1}^{n-1} s_l)^2 r)^{(n-1)/(2n-1)}$. 
\item The case $r > s_n$. 
Similarly, we search the $\mathcal{S} (x_2, \dots, x_n)$ for all 
$(x_2, \dots, x_{n-1}, x_n) \in X_2 \times \cdots \times X_{n-1} \times X_n (r)$ 
in time
\[ q^{\varepsilon} (\prod_{l=2}^n s_l)^{1/2} (\log q)^{O(1)} 
\ll q^{n(n-1)/2(2n-1)+\varepsilon} (\log q)^{O(1)}, \]
since $\prod_{l=2}^n s_n \leq ((\prod_{l=1}^{n-1} s_l)^2 s_n)^{(n-1)/(2n-1)} 
< ((\prod_{l=1}^{n-1} s_l)^2 r)^{(n-1)/(2n-1)}$. 
\end{enumerate}
\end{proof}

In \cite{ds}, van Dam and Shparlinski mentioned when the multiplicative orders
are large, there is a more efficient quantum algorithm. Similarly, we can 
also consider a more efficient quantum algorithm. 

\begin{theorem} \label{th3}
If we assume 
\[ \Bigl( \prod_{l=1}^{n-1} s_l \Bigr)^2 s_n > q^n \log q, \]
then we can either find a solution $\vec{x} \in \bm{X}^n$ of the equation 
\eqref{th1-eq} or decide that it does not have a solution in time 
$q^{1/2+\varepsilon} ((\prod_{l=1}^{n-1} s_l)^2 s_n)^{-1/2(2n-1)} (\log q)^{O(1)}$ 
as a quantum computer, 
except for at most $q/\log q$ exceptional $b$'s. 
\end{theorem}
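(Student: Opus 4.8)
The plan is to mirror the structure of the proof of Theorem~\ref{th2}, but to choose the truncation parameter $r$ more aggressively, exploiting the hypothesis that the product of orders is large. First I would apply Shor's order-finding algorithm to obtain $s_1,\dots,s_n$ in polynomial time and relabel so that $s_1\geq\cdots\geq s_n$. Under the assumption $(\prod_{l=1}^{n-1}s_l)^2 s_n>q^n\log q$, Corollary~\ref{cor} guarantees that a solution exists already inside $\bm{X}^n(r)$ for the value
\[
r=\Bigl\lceil q^n\Bigl(\prod_{l=1}^{n-1}s_l\Bigr)^{-2}\log q\Bigr\rceil,
\]
and this $r$ satisfies $r\leq s_n$, so \emph{only} the case $r\leq s_n$ arises — the dichotomy of Theorems~\ref{th1} and~\ref{th2} collapses to a single branch. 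Then I would run the same Grover search over $(x_2,\dots,x_{n-1},x_n)\in X_2\times\cdots\times X_{n-1}\times X_n(r)$ with the subroutine $\mathcal{S}(x_2,\dots,x_n)$ built from Shor's discrete logarithm algorithm, whose cost is $q^{\varepsilon}(r\prod_{l=2}^{n-1}s_l)^{1/2}(\log q)^{O(1)}$.

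The remaining work is purely a bookkeeping estimate on $r\prod_{l=2}^{n-1}s_l$. Writing $P=\prod_{l=1}^{n-1}s_l$, one has $r\prod_{l=2}^{n-1}s_l = r\,P/s_1$. Since $r$ is (up to the ceiling) $q^n P^{-2}\log q$, this is roughly $q^n P^{-1}s_1^{-1}\log q$. Using the ordering $s_1\geq s_l$ for all $l$ together with $s_1\geq s_n$, one gets $s_1\geq (P^2 s_n)^{1/(2n-1)}$, hence $q^n P^{-1}s_1^{-1}\leq q^n P^{-1}(P^2 s_n)^{-1/(2n-1)}$. A short manipulation — bounding $P^{-1}\leq (P^2 s_n)^{-1/2}\cdot s_n^{1/(2(2n-1))}\cdot(\dots)$ is the kind of interpolation used in Theorem~\ref{th2} — lets one combine the exponents so that the square root of $r\prod_{l=2}^{n-1}s_l$ is bounded by $q^{1/2+\varepsilon}(P^2 s_n)^{-1/2(2n-1)}(\log q)^{O(1)}$, which is exactly the claimed running time. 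The logarithmic factor from $r$'s definition and the $q^\varepsilon$ from Grover's amplitude-amplification overhead are absorbed into $(\log q)^{O(1)}$ and the $\varepsilon$, respectively.

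I expect the only real obstacle to be pinning down the exponent arithmetic cleanly: one must check that the inequality $r\prod_{l=2}^{n-1}s_l\leq q^{1+2\varepsilon}(P^2s_n)^{-1/(2n-1)}(\log q)^{O(1)}$ genuinely follows from $s_1\geq(P^2s_n)^{1/(2n-1)}$ and $r\asymp q^nP^{-2}\log q$, i.e.\ that
\[
q^n P^{-2}\cdot\frac{P}{s_1}\leq q\,(P^2s_n)^{-1/(2n-1)}
\iff
q^{n-1}\leq P\,s_1\,(P^2s_n)^{-1/(2n-1)},
\]
and the right-hand side is $\geq P\,(P^2s_n)^{1/(2n-1)}(P^2s_n)^{-1/(2n-1)}=P\geq\dots$; here the hypothesis $P^2s_n>q^n\log q$ enters to guarantee $P$ is large enough (together with $s_1\leq q$, which makes $r\leq s_n$ consistent). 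Once that single inequality is verified, the rest is the routine substitution already performed in Theorem~\ref{th2}, and no new quantum-algorithmic ingredient is needed beyond Shor and Grover.
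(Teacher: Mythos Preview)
Your proposal has a genuine gap: you invoke the \emph{plain} Grover search, with cost $q^{\varepsilon}\bigl(r\prod_{l=2}^{n-1}s_l\bigr)^{1/2}$, and then try to bound the exponent. That bookkeeping cannot close. Following your own reduction, you need
\[
q^{n-1}\leq P\,s_1\,(P^2s_n)^{-1/(2n-1)},
\]
and after applying $s_1\geq(P^2s_n)^{1/(2n-1)}$ you are left with needing $q^{n-1}\leq P$. But $P=\prod_{l=1}^{n-1}s_l\leq(q-1)^{n-1}<q^{n-1}$ always, so the inequality is \emph{false} for every choice of parameters. (Concretely, for $n=2$ and $s_1=s_2=q^{3/4}$ the hypothesis holds, your cost is $q/s_1=q^{1/4}$, but the theorem promises $q^{1/8}$.) The hypothesis $P^2s_n>q^n\log q$ only forces $P\gtrsim q^{(n-1)/2}$, which is far too weak.

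The missing idea is the one the paper actually uses: not plain Grover, but the Boyer--Brassard--H{\o}yer--Tapp variant that locates one of $M$ marked items in a set of size $t$ with $O(\sqrt{t/M})$ queries. Here Lemma~\ref{lem1} gives $M\asymp rP/q$ solutions inside the search space of size $t=r\prod_{l=2}^{n-1}s_l=rP/s_1$, so the query cost becomes
\[
q^{\varepsilon}\sqrt{\frac{rP/s_1}{rP/q}}\,(\log q)^{O(1)}=q^{1/2+\varepsilon}s_1^{-1/2}(\log q)^{O(1)},
\]
and now the single inequality $s_1\geq(P^2s_n)^{1/(2n-1)}$ immediately yields the stated bound. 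In short, the improvement over Theorem~\ref{th2} comes not from a sharper choice of $r$ but from exploiting the \emph{multiplicity} of solutions that Lemma~\ref{lem1} guarantees; without that, you simply reproduce the Theorem~\ref{th2} bound.
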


\begin{remark}
The upper bound of the running time of the algorithm of Theorem \ref{th3} is 
\[ O(q^{(n-1)/2(2n-1)+\varepsilon} (\log q)^{O(1)}). \]
\end{remark}

\begin{proof}[\indent\sc Proof of Theorem \ref{th3}]
We may assume without loss of generality that $s_1 \geq s_2 \geq s_3$. 
We put 
\begin{equation}
r = \Bigl\lfloor q^n \Bigl( \prod_{l=1}^{n-1} s_l \Bigr)^{-2} \log q \Big\rfloor
\end{equation}
Then from the assumption of the theorem we see that $r \leq s_n$. Hence 
there are some solutions of \eqref{th1-eq} in $\bm{X}^n (r)$ and 
we denote the number of the solutions of \eqref{th1-eq} by $M$. 
Note that $M \asymp (r \prod_{l=1}^{n-1} s_l)/q$. 

As in the case of \cite{ds}, we use the version of Grover's algorithm as 
described in \cite{bbht} that finds one out of $m$ matching items in a set 
of size $t$ by using only $O(\sqrt{t/m})$ queries. We search the subroutine 
$\mathcal{S} (x_2, \dots, x_n)$ for all $(x_2, \dots, x_{n-1}, x_n) \in 
X_2 \times \cdots \times X_{n-1} \times X_n (r)$. Then the complexity is 
\[ q^{\varepsilon} \Bigl( \frac{(\prod_{l=2}^{n-1} s_l) r}{M} \Bigr)^{1/2} 
(\log q)^{O(1)} 
\leq q^{1/2+\varepsilon} \Bigl( \Bigl( \prod_{l=1}^{n-1} s_l 
\Bigr)^2 s_n \Bigr)^{-1/2(2n-1)} 
(\log q)^{O(1)}. \]
\end{proof}

%%%%%%%%%%%%%%%%%%%%%%%%%%%%%%%%%%%%%%%
\section{Concluding remarks}
%%%%%%%%%%%%%%%%%%%%%%%%%%%%%%%%%%%%%%%%%%%
See the following list. 
\begin{center}
\begin{tabular}{|c|c|c|c|} \hline
$\#$ of variables & Classical & Quantum & ratio (C/Q) \\ \hline
2 \ (van Dam and Shparlinski) & 1 & 1/3 & 3 \\ \hline
3 \ (Ohno, S, Yamazaki) & 3/2 & 3/5 & 5/2 \\ \hline
\vdots & \vdots & \vdots & \vdots \\ \hline
$n$ & $n/2$ & $n(n-1)/2(2n-1)$ & $(2n-1)/(n-1)$ \\ \hline
\end{tabular}
\end{center}
We notice that the ratio is virtually $2$ when $n$ is sufficiently large. 
It seems to come from the effect of Grover's algorithm.

\appendix
%%%%%%%%%%%%%%%%%%%%%%%
\section{Appendix}
%%%%%%%%%%%%%%%%%%%%%%%%%%
For a natural number $n$, we define the divisor function by 
\begin{equation}
d (n) := \# \{ d \in \mathbb{N} \ | \ d|n \} = \sum_{d|n} 1. 
\end{equation}
In Section \ref{intro}, we introduced the notion of the multiplicative order $s$ 
of $g \in \mathbb{F}_q^{\times}$ and mentioned the multiplicative order 
is a divisor of $q-1$. 

We put $g_1$ and $g_2$ have the same multiplicative order. Then there exists a 
natural number $l$ such that $g_2 = g_1^l$. Hence, we have 
\begin{equation}
a_1 g_1^{x_1} + a_2 g_2^{x_2} = a_1 g_1^{x_1} + a_2 g_1^{lx_2}, 
\end{equation}
where $a_i, g_i \in \mathbb{F}_q^{\times}$ ($i=1,2$). The right-hand side of the 
above equation is a element of $\overline{a_1} + \overline{a_2} \in 
\mathbb{F}_q^{\times} / \langle g_1 \rangle$, 
where $\overline{a}$ is a coset of $\mathbb{F}_q^{\times} / \langle g_1 \rangle$ 
and $\langle g_1 \rangle$ is the cyclic group generated by $g_1$. Therefore 
our central problem \eqref{eq} is reduced to 
\begin{equation}
\widetilde{f}_b (z_1, \dots, z_{\mu}) 
= c_1 h_1^{z_1} + \cdots + c_{\mu} h_{\mu}^{z_{\mu}} -b = 0, 
\end{equation}
where each $h_i$ ($i=1, \dots, \mu$) does not have the same multiplicative order 
and $\mu \leq d(q-1)$.  

It is known that 
\[ d(n) \ll n^{\varepsilon} \] 
for any positive number $\varepsilon$ (for instance, see \cite{iv}). 
Hence, we have 
\[ \mu \ll q^{\varepsilon} \]
for any $\varepsilon > 0$.

\end{document}